\theoremstyle{plain}
\newtheorem{theorem}{Theorem}
\newtheorem{lemma}[theorem]{Lemma}
\newtheorem{claim}[theorem]{Claim}
\theoremstyle{definition}
\newtheorem{example}{Example}
\newtheorem{construction}{Construction}
\newcommand{\blue}[1]{{\color[rgb]{0.2,0,0.8} #1}}
\newcommand{\red}[1]{{\color[rgb]{0.75,0,0} #1}}
\def\BibTeX{{\rm B\kern-.05em{\sc i\kern-.025em b}\kern-.08em
    T\kern-.1667em\lower.7ex\hbox{E}\kern-.125emX}}
\begin{document}
\title{Bidirectional Piggybacking Design for Systematic Nodes with Sub-Packetization $l=2$}
\author{\IEEEauthorblockN{Ke Wang}
	\IEEEauthorblockA{\textit{IEIT SYSTEMS Co., Ltd., Jinan, 250101, China
} \\
		\textit{KLMM, Academy of Mathematics and Systems Science, Chinese Academy of Sciences, Beijing 100190, China}\\
	Emails: wangke12@ieisystem.com} 
}
\maketitle

\thispagestyle{empty}
\begin{abstract}
	In 2013, Rashmi et~al. proposed the piggybacking design framework to reduce the repair bandwidth 
	of $(n,k;l)$ MDS array codes with small sub-packetization $l$ and it has been studied extensively 
	in recent years.  In this work, we propose an explicit bidirectional piggybacking design (BPD) with
	 sub-packetization $l=2$ and the field size $q=O(n^{\lfloor r/2 \rfloor \!+\!1})$ for systematic 
	 nodes, where  $r=n-k$ equals the redundancy of an $(n,k)$ linear code. And BPD has lower average 
	 repair bandwidth than previous piggybacking designs for $l=2$ when $r\geq 3$.  Surprisingly, 
	 we can prove that the field size $q\leq 256$ is sufficient when $n\leq 15$ and $n-k\leq 4$. 
	 For example, we provide the BPD for the $(14,10)$ Reed-Solomon (RS) code 
	 over $\mathbb{F}_{2^8}$ and obtain approximately $41\%$ savings in the average 
	 repair bandwidth for systematic nodes compared with the trivial repair approach. 
	 This is the lowest repair bandwidth achieved so far for $(14,10)_{256}$ RS codes 
	 with sub-packetization $l=2$.
	
\end{abstract}

\begin{IEEEkeywords}
	Distributed storage,  piggybacking design, MDS array codes, sub-packetization, repair bandwidth ratio.
\end{IEEEkeywords}
\section{Introduction}\label{sec0}
Nowadays, Reed-Solomon (RS)  codes  have attracted great attention in distributed storage systems since they are  Maximum-Distance-Separable (MDS) and thus make optimal use of storage resources for providing reliability.  As in Table \ref{t2}, there  are many companies using RS codes in their storage systems. \begin{table}[htbp]
	\renewcommand\arraystretch{1.3}
	\caption{\scriptsize Some  Reed-Solomon Codes Employed in Distributed Storage Systems}\label{t2}
	\centering 	{\footnotesize \begin{tabular}{|c|c|}
		\hline Storage Systems &$(n,k)$ Reed-Solomon codes\\
		\hline Google File System II (Colossus)& $(9,6)$\\
		\hline Quantcast File System & $(9,6)$\\
		\hline Hadoop 3.0 HDFS-EC & $(9,6)$\\
		\hline Linux RAID-6& $(10,8)$\\
		\hline IBM Spectrum Scale RAID&$(10,8)$, $(11,8)$\\
		\hline Yahoo Cloud Object Store & $(11,8)$\\
		\hline Baidu's Atlas Cloud Storage&$(12,8)$\\
		\hline Facebook's F4 Storage System&$(14,10)$\\
		\hline Microsoft's Pelican Cold Storage & $(18,15)$\\
		\hline
	\end{tabular}}
\end{table}
And the repair of  RS codes has become a significant research topic in recent years.  The repair bandwidth is a important metric in the repairing process\cite{Dimakis2011}.  Obviously, an $(n,k)$ MDS code has a trivial repair approach, i.e., downloading all data from any $k$ nodes and then reconstructing the data stored in the failed node.  Particularly, the ones with the optimal
repair bandwidth are termed as MSR codes \cite{Dimakis2011}. However,  the exponential sub-packetization is necessary for MSR codes. Hence MSR codes are extremely difficult to apply into distributed
storage systems. Consequently,  the main  approaches using in practical are  the  piggybacking designs with small sub-packetization, since piggybacking designs have low repair bandwidth and disk I/O.  

In 2013, Rashmi et al.\cite{Ra2013,Ra2017} proposed the piggybacking design (PD) framework based on the  $(n,k)$  MDS code to reduce the repair bandwidth with small sub-packetization $l$. The failed nodes can be recovered by solving piggybacking functions which are downloaded from surviving nodes. And the piggybacked codes retain the MDS property. And the PDs  were
studied extensively in \cite{Tang2015,Ge2016,Huang2018,Sun2021,Tang2019,Shi2022,Hou2021,Wang2023}. 

For comparisons, we define ARBR as the ratio of the average repair bandwidth of systematic nodes to the bandwidth of the  trivial repair approach. Obviously, ARBR reflects the bandwidth savings in average over the trivial repair approach.
However, for the  sub-packetization $l=2$,  ARBR of PD in  \cite{Ra2013,Ra2017} is still the optimal among all existing PDs.

In order to further reduce the repair bandwidth in PDs, Kralevska $et~al$.\cite{K.k2018} proposed HashTag Erasure Codes (HTECs) with the sub-packetization  $2\leq l\leq r^{\lceil \frac{k}{r} \rceil}$. However the construction of HTECs is not explicit based on the field size $q\geq\binom{n}{r}rl$ where $r=n-k$ and it can not guarantee that the 
 repair of the RS codes shown in Table \ref{t2} can be operated in a field with size $q=256$.

\subsection{Contributions}
In this paper,  we propose a bidirectional piggybacking design (BPD) to reduce the repair bandwidth of systematic nodes with sub-packetization $l=2$.  

Compared with previous PDs, for each substripe $i\in[2]$, the symbols in substripe $j\in[2]\setminus\{i\}$ can be piggybacked to substripe $i$. By contrast, in 
previous PDs only symbols in the substripe 1 are
piggybacked to substripe 2. Therefore, our scheme has more positions to piggyback data symbols and can further reduce ARBR when $l=2$. Although in our scheme, the field size is required to be larger, it still can repair some  major RS codes over $\mathbb{F}_{2^8}$ and we will show that the field size $q\leq 256$ is enough when $n\leq 15$ and $n-k\leq 4$. 

Compared with HTECs\cite{K.k2018}, our scheme is explicit and needs smaller field size, although the  value of the sub-packetization $l$ is more flexible in \cite{K.k2018} as in Table. \ref{t0}. Moreover, in order to achieve the repair bandwidth as low as possible,   the help of computer search is required  in  \cite{K.k2018}.  On the contrary, our scheme is deterministic while retaining ARBR in \cite{K.k2018} when $l=2$. 
\begin{table}[htbp]
	\renewcommand\arraystretch{1.75}
	\caption{\scriptsize Comparison  with HTECs }\label{t0}
	\centering		{\footnotesize	\begin{tabular}{|c|c|c|c|}
		\hline &Sub-packetization $l$&Field size $q$& Explicit \\
		\hline HTECs\cite{K.k2018} &  $2\leq l\leq r^{\lceil \frac{k}{r} \rceil}
		$&$q\geq \binom{n}{r}rl$&No\\
		\hline This paper & $l=2$& $q=O(n^{\lfloor \frac{r}{2} \rfloor \!+\!1})
		$&Yes\\
		\hline 
	\end{tabular}}
	
\end{table}

Specially, in the repair process, sub-packetization $l=2$ means the  sequential I/O. 
This will accelerate the repair speed. For instance, in SATA hard drives, accessing a 4 KB disk block non-sequentially takes milliseconds (or operates at hundreds of KB/s throughput), whereas sequential I/O operations can achieve speeds exceeding 100 MB/s \cite{llc,parac}. Throughout, we focus on the one node failure of  systematic nodes  model.

The remaining of the paper is organized as follows. Section \ref{II}  describes the framework of  the bidirectional  piggybacking design with sub-packetization $l=2$ and show the average repair bandwidth of systematic nodes  of the design.  The explicit construction with the MDS property is proposed in Section \ref{III}. 
Section \ref{sb7} shows that our design is feasible to repair some  RS codes over $\mathbb{F}_{2^8}$.  The comparisons with existing schemes and conclusion are given in Section \ref{sb9}.  

Throughout, we focus on the one node failure of  systematic nodes  model.

\addtolength{\topmargin}{0.03in}

\section{The Bidirectional Piggybacking  Design Framework}\label{II}

In this section, we propose the framework of bidirectional piggybacking design (BPD) with sub-packetization $l=2$.

For any three integers $i,j,t$ where $i<j$, denote by $\left[ i,j \right] =\left\{ i,i+1,\dots \,\,,j \right\} , \left[ i,j \right] +t=\left[i+t,j+t\right]
$ and $\left[ i \right] =\left\{ 1,2,\dots \,\,,i \right\}
$.
\subsection{The bidirectional piggybacking  design}\label{sb0}
\begin{figure}[H]
	\centering
	\includegraphics[width=4.2cm]{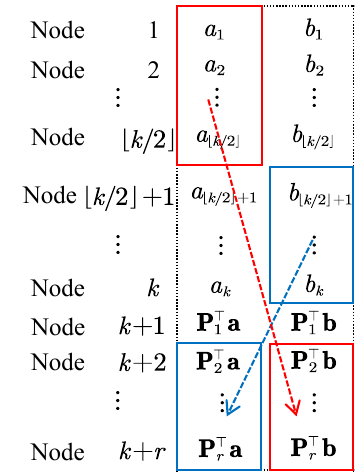}
	
	\caption{\scriptsize The bidirectional piggybacking  design }\label{fg.1}
\end{figure}
For an $\left( n,k;2 \right) $ MDS array code over $\mathbb{F}_q$, the original data symbols $\mathbf{a},\mathbf{b}$  are stored in the $k$ systematic nodes, where $\mathbf{a}=\left( a_{1},\dots ,a_{k} \right) ^\top\in \mathbb{F}_{q}^{k}
$ and $\mathbf{b}=\left( b_{1},\dots ,b_{k} \right) ^\top\in \mathbb{F}_{q}^{k}
$  . The corresponding parity symbols are $\left( \mathbf{P}_{1}^\top\mathbf{a},\dots ,\mathbf{P}_{r}^\top\mathbf{a} \right) ^\top\in \mathbb{F}_{q}^{r}$ and $\left( \mathbf{P}_{1}^\top\mathbf{b},\dots ,\mathbf{P}_{r}^\top\mathbf{b} \right) ^\top\in \mathbb{F}_{q}^{r},$ where $\mathbf{P}_j\in \mathbb{F}_{q}^{k}$, $j\in[r]$. Call  the substripe $\left( a_{1},\dots ,a_{k},\mathbf{P}_{1}^\top\mathbf{a},\dots ,\mathbf{P}_{r}^\top\mathbf{a} \right) ^\top\in \mathbb{F}_{q}^{n} 
$ (also $\left( b_{1},\dots ,b_{k},\mathbf{P}_{1}^\top\mathbf{b},\dots ,\mathbf{P}_{r}^\top\mathbf{b} \right) ^\top\in \mathbb{F}_{q}^{n} 
$ )  the original substripe  throughout this paper.
In  BPD model, as in the original PDs, each original substripe  is an $[n, k]$ MDS codeword.

Denote $k=\lfloor \frac{k}{2} \rfloor +\lceil \frac{k}{2} \rceil$. Let $\alpha_1=\lfloor \frac{k}{2} \rfloor$ and $\alpha_2=\lceil \frac{k}{2} \rceil$. As in Fig. 1, the steps of the bidirectional piggybacking design of data symbols are as follows:

\begin{itemize}
	\item[(1)]   Partition the first $\alpha_1$ entries $\{ a_1,a_2,\dots ,a_{\alpha_1} \} $ of $\mathbf{a}$ into $r-1$ parts evenly, i.e. each part contains $\lfloor \frac{\alpha _1}{r-1} \rfloor$  or $\lceil \frac{\alpha _1}{r-1} \rceil $ symbols. Then add each part to $\left\{ \mathbf{P}_{2}^\top\mathbf{b},\dots ,\mathbf{P}_{r}^\top\mathbf{b} \right\}$ in turn, i.e., add each symbol directly to the corresponding position.
	\item[(2)] Partition the last $\alpha_2$ entries $\left\{ b_{\alpha_1\!+\!{1}},b_{\alpha_1\!+\!{2}},\dots ,b_k \right\} $ of $\mathbf{b}$ into $r-1$ even parts,   i.e. each part contains $\lfloor \frac{\alpha _2}{r-1} \rfloor$  or $\lceil \frac{\alpha _2}{r-1} \rceil $ symbols. In order to retain the MDS property of BPD codes, we need add each part to  $\left\{ \mathbf{P}_{2}^\top\mathbf{a},\dots ,\mathbf{P}_{r}^\top\mathbf{a} \right\}$ in turn by some special functions, such as multiplying each symbol by a variable $\lambda\in \mathbb{F}_q$, then adding it to the corresponding position.
\end{itemize}





\subsection{A toy example}\label{sb1}
\begin{figure}[ht]
	\renewcommand\arraystretch{1.6}
	\centering
	{\footnotesize\begin{tabular}{r|c|c|}
			\cline{2-3}{\rm Node~$1$}&$a_1$&$b_1$\\
			\cdashline{2-3}[0.5pt/1.5pt]$\vdots$&$\vdots$&$\vdots$\\
			\cdashline{2-3}[0.5pt/1.5pt]{\rm Node~$6$}&$a_6$&$b_6$\\
			\cdashline{2-3}[0.5pt/1.5pt]{\rm Node~$7$}&${\mathbf P}_1^\top{\mathbf a}$&${\mathbf P}_1^\top{\mathbf b}$\\
			\cdashline{2-3}[0.5pt/1.5pt]{\rm Node~$8$}&${\mathbf P}_2^\top{\mathbf a}+\blue{\lambda b_4}$&${\mathbf P}_2^\top{\mathbf b}+\red{a_1}$\\
			\cdashline{2-3}[0.5pt/1.5pt]{\rm Node~$9$}&${\mathbf P}_3^\top{\bm a}+\blue{\lambda(b_5+b_6)}$&${\mathbf P}_3^\top{\bm b}+\red{a_2+a_3}$\\\cline{2-3}
	\end{tabular}}
	\caption{\scriptsize A $(9,6;2)$  BPD code}\label{sp1}
\end{figure}

\begin{example}Consider a $(9,6;2)$ BPD code satisfying the MDS property  with a suitable $\lambda\ne0$ in Fig. \ref{sp1}. Clearly, $\alpha_1=\alpha_2=3$. For simplicity , it is sufficient to consider the repair of the first $\alpha_1$ nodes, since the repair process of the first $\alpha_1$ and $\alpha_2$ nodes is similar.
	
If node 1 fails, download $\left\{ b_2,\dots ,b_6,\mathbf{P}_{2}^\top\mathbf{b} \right\} $ first, then the 
symbol $b_1$ can be recovered by the MDS property. For recovering symbol $a_1$, we only need to download $\{\mathbf{P}_{2}^\top\mathbf{b}+a_1\}$, since $\mathbf{P}_{2}^\top\mathbf{b}$ has been recovered in last step and then $a_1$ can be recovered by linear operations. Totally, download $7$ symbols to repair node 1.  If node $2$ or $3$ fails, firstly, $b_2$ or $b_3$ can be recovered through the MDS property with downloading 6 symbols as in recovering symbol $b_1$. Then download $\{\mathbf{P}_{2}^\top\mathbf{b}+a_2+a_3, a_3\}$ or $\{\mathbf{P}_{2}^\top\mathbf{b}+a_2+a_3, a_2\}$  to recover $a_2$ or $a_3$. To repair node $2$ or $3$, need to download 8 symbols. Similarly, one can derive  the repair of nodes $4,5,6$.  Therefore, to repair one node, need to download $\frac{23}{3}$ symbols averagely.  
	
Compared with the trivial repair process which only utilizes the MDS property to repair the failed nodes, it  offers bandwidth	savings of $36.1 \%$.\end{example}

\subsection{The average repair bandwidth ratio of systematic nodes}\label{sb2} 
For an  $\left( n,k;2 \right) $ BPD code, define the repair bandwidth of node $i$ as $\gamma_i$, $i\in[k]$. Then the average repair bandwidth and  repair bandwidth ratio  of systematic nodes are defined as ${\gamma}$ and $\rho$ respectively, where ${\gamma}=\frac{\sum_{i=1}^k{\gamma _i}}{k}, \rho=\frac{{\gamma}}{2k}$.

In $\left( n,k;2 \right) $  BPD design, the systematic nodes are partitioned into two even  parts, i.e. the first $\alpha_1$ and the last $\alpha_2$ nodes. Let $u_1=\lfloor \frac{\alpha_1}{(r-1)} \rfloor, u_2=\lfloor \frac{\alpha_2}{(r-1)} \rfloor, $, and denote $\alpha_1=(r-1)u_1+v_1, \alpha_2=(r-1)u_2+v_2$. 

\begin{theorem}
	The average repair bandwidth ratio  of systematic nodes (ARBR) is 
	$$\rho=\frac{1}{2}+\frac{u_1\alpha_1+u_2\alpha_2+v_1(u_1+1)+v_2(u_2+1)}{2k^2}.$$
\end{theorem}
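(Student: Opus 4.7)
The plan is to compute $\gamma_i$ for each systematic node $i\in[k]$, sum them, and then divide by $2k^2$. By the symmetric construction, it suffices to analyze nodes in the first half $[1,\alpha_1]$; the second half $[\alpha_1+1,k]$ is treated analogously with $(\alpha_1,u_1,v_1)$ replaced by $(\alpha_2,u_2,v_2)$. Step (1) in the construction of BPD partitions $\{a_1,\ldots,a_{\alpha_1}\}$ evenly among $\mathbf{P}_2^\top\mathbf{b},\ldots,\mathbf{P}_r^\top\mathbf{b}$; so among these $r-1$ parts exactly $v_1$ have size $u_1+1$ and $r-1-v_1$ have size $u_1$, since $\alpha_1=(r-1)u_1+v_1$. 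Let $s_i$ denote the size of the part containing $a_i$.

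Next I will mimic the repair procedure demonstrated in the toy example for a failing node $i\in[1,\alpha_1]$. First, one recovers $b_i$ by the MDS property of the second (original) substripe: download the $k-1$ surviving $b_j$'s and one clean parity $\mathbf{P}_1^\top\mathbf{b}$ (which carries no piggyback by construction), for a cost of $k$ symbols; at this point all of $\mathbf{b}$ and hence every $\mathbf{P}_j^\top\mathbf{b}$ is known. Second, download the single piggybacked symbol $\mathbf{P}_j^\top\mathbf{b}+\sum_{\ell\in I}a_\ell$ that contains $a_i$ (where $I$ is the part with $|I|=s_i$), subtract $\mathbf{P}_j^\top\mathbf{b}$, then download the remaining $s_i-1$ values $\{a_\ell:\ell\in I\setminus\{i\}\}$ from surviving systematic nodes. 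This gives $\gamma_i=k+s_i$.

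Summing over $i\in[1,\alpha_1]$ and grouping by part size gives
\begin{equation*}
\sum_{i=1}^{\alpha_1}\gamma_i=\alpha_1 k+v_1(u_1+1)^2+(r-1-v_1)u_1^2.
\end{equation*}
Using $(r-1)u_1+v_1=\alpha_1$, the tail simplifies to $u_1\alpha_1+v_1(u_1+1)$. The analogous computation on $[\alpha_1+1,k]$ yields $\alpha_2 k+u_2\alpha_2+v_2(u_2+1)$. Adding the two halves and using $\alpha_1+\alpha_2=k$ gives $\sum_{i=1}^k\gamma_i=k^2+u_1\alpha_1+u_2\alpha_2+v_1(u_1+1)+v_2(u_2+1)$, so $\gamma=\sum\gamma_i/k$ and $\rho=\gamma/(2k)$ produce the claimed formula.

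The only nontrivial point is justifying that step (a) of the repair truly costs $k$: this requires a piggyback-free parity for the $\mathbf{b}$-substripe, which is provided by $\mathbf{P}_1^\top\mathbf{b}$ because the construction piggybacks $a$-symbols only into $\mathbf{P}_2^\top\mathbf{b},\ldots,\mathbf{P}_r^\top\mathbf{b}$; and the analogous fact for the second half uses the clean parity $\mathbf{P}_1^\top\mathbf{a}$. The $s_i$ linear equations obtained in step (b) are trivially invertible for the single unknown $a_i$ once the other $s_i-1$ $a_\ell$'s are downloaded, so no further algebraic obstruction arises. (The MDS property of the piggybacked code, used implicitly when decoding $b_i$, is not proved here but is the subject of Section~\ref{III}.)
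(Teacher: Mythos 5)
Your proposal is correct and follows essentially the same route as the paper's proof: recover the companion substripe symbol via the clean parity at cost $k$, then pay the size of the piggyback part ($u_1$ or $u_1+1$, resp. $u_2$ or $u_2+1$) for the remaining symbol, and sum over the two halves. The only difference is bookkeeping — you sum $v_1(u_1+1)^2+(r-1-v_1)u_1^2$ and simplify, whereas the paper writes the half-total directly as $(k+u_1)\alpha_1+v_1(u_1+1)$ — and these coincide; your parenthetical is slightly off only in that decoding $b_i$ uses the assumed MDS property of the original substripe, not of the piggybacked code.
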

\begin{proof}
	Since the repair processes of the first $\alpha_1$ nodes and the last $\alpha_2$ nodes are almost the same, we only show the repair of the first $\alpha_1$ nodes.   Let  each of the first $r-1-v_1$ parts of $\{ a_1,a_2,\dots ,a_\frac{k}{2} \} $ contain $u_1$ symbols and the remaining $v_1$ parts each contain $u_1+1$ symbols.
	
	For any $i\in[(r-1-v_1)u_1]$, if node $i$ fails, firstly, download the $k$ symbols $\{b_1,\dots,b_{i-1},b_{i+1},\dots, {\mathbf P}_1^\top{\mathbf b}\}$ to recover the symbol $b_i$ by the MDS property. Secondly, download $u_1$ symbols to recover the symbol $a_i$ by some linear operations as in Section \ref{sb1}. Totally, in order to repair node $i$, we need to download $k+u_1$ symbols. Similarly, for any $j\in[(r-1-v_1)u_1+1,\alpha_1]$, if node $j$ fails, we need download $k+u_1+1$ symbols to repair node $j$. Therefore the total repair bandwidth  of the first $\alpha_1$ nodes is  
	\begin{align}\label{eq1}
		(k+u_1)\alpha_1+v_1(u_1+1).
			\end{align}
Similarly, one can derive the total repair bandwidth  of the last $\alpha_2$ nodes is	\begin{align}\label{eq2}
(k+u_2)\alpha_2+v_2(u_2+1).
\end{align}

	Combining \eqref{eq1} and \eqref{eq2}, it  follows that  $$\rho=\frac{1}{2}+\frac{u_1\alpha_1+u_2\alpha_2+v_1(u_1+1)+v_2(u_2+1)}{2k^2}.$$
\end{proof}

\section{Explicit Construction of BPD }\label{III}

In this section,  we present an explicit construction of $(n,k;2)$ BPD codes over  $\mathbb{F}_q$ with the MDS property where  $q=O(n^{\lfloor \frac{r}{2} \rfloor \!+\!1})$.   

\subsection{ Explicit construction of generator matrices}\label{sb3}
Let $\mathbb{E}$ be a subfield of $\mathbb{F} _q$ with $\left| \mathbb{E} \right|\geq n-1 ,   \left[\mathbb{F} _q:\mathbb{E} \right]=\lfloor \frac{r}{2} \rfloor +1$, thus the field size  $q=O(n^{\lfloor \frac{r}{2} \rfloor \!+\!1})$.  Let $\mathbf{G}=\left( \mathbf{I}\left| \mathbf{P} \right. \right) $  be the generator matrix of each original substripe in an $(n, k; 2)$ BPD code  over $\mathbb{E}$,  where $\mathbf{I}$ is a $k\times k$ identity matrix and $\mathbf{P}=\left( \mathbf{P}_1,\dots ,\mathbf{P}_r \right) $ is a $k\times r$ matrix with  $\mathbf{P}_j=\left( p_{1,j},\dots ,p_{k,j} \right) ^\top\in \mathbb{E} ^k,j\in [r]$.  Since each original substripe is an $[n, k]$ MDS codeword, every square submatrix of  $\mathbf{P}$  is  invertible ($\mathbf{P}$  is superregular).

\begin{construction}\label{cons} Let $\lambda \in \mathbb{F}_q$ satisfying $\mathbb{E}\left( \lambda \right)= \mathbb{F}_q$, then the degree of the minimal polynomial of $\lambda$ over  $\mathbb{E}$ is $ \lfloor \frac{r}{2} \rfloor +1$. As in Section \ref{sb1}, add each symbol of $ \{ a_1,a_2,\dots ,a_{\alpha_1} \} $ directly to the corresponding parity symbol and multiply each symbol  of  $ \{ b_1,b_2,\dots ,b_{\alpha_2} \} $  by  $\lambda$ , then add it to the corresponding position in turn, where $\alpha_1=\lfloor \frac{k}{2} \rfloor$ and $\alpha_2=\lceil \frac{k}{2} \rceil$. \end{construction}

Let  $\tilde{\mathbf{G}}=\left( \mathbf{I}\left| \tilde{\mathbf{P}} \right. \right) $ be the systematic  generator matrix of an  $(n,k;2)$ BPD code  defined above over $\mathbb{F} _q$, where $\mathbf{I}$ is a $2k\times 2k$ identity matrix and $\tilde{\mathbf{P}}$ is a $2k\times 2r$ matrix.   Without loss of generality, hereinafter, we  regard  $\tilde{\mathbf{G}}$ as a $k\times n $ block matrix with each block $ 2\times 2$.  Then  $\tilde{\mathbf{G}}$ can be defined as 
\begin{align}\label{c0}
\small{	\tilde{\mathbf{G}}=\left( \begin{matrix}
		\mathbf{I}_2&		&		&		&		\tilde{\mathbf{P}}_{1,1}&		\tilde{\mathbf{P}}_{1,2}&		\cdots&		\tilde{\mathbf{P}}_{1,r}\\
		&		\mathbf{I}_2&		&		&		\tilde{\mathbf{P}}_{2,1}&		\tilde{\mathbf{P}}_{2,2}&		\cdots&		\tilde{\mathbf{P}}_{2,r}\\
		&		&		\ddots&		&		\vdots&		\vdots&		\vdots&		\vdots\\
		&		&		&		\mathbf{I}_2&		\tilde{\mathbf{P}}_{k,1}&		\tilde{\mathbf{P}}_{k,2}&		\cdots&		\tilde{\mathbf{P}}_{k,r}\\
	\end{matrix} \right)}
\end{align}
where   $\mathbf{I}_2$ is a $2\times 2$ identity matrix and $	\tilde{\mathbf{P}}_{i,j}$ is a $2\times 2$ matrix over $\mathbb{F}_q$, $i\in [k]$ and $j\in [r]$. 
If $2\mid k$ and $r-1\mid \frac{k}{2}$,  by the construction of $\mathbf{G}$, it follows that  
\begin{align}\label{c1}
	\small{\tilde{\mathbf{P}}_{i,j}=\begin{cases}\vspace{0.05cm}
		\left( \begin{matrix}
			p_{i,j}&		1\\
			0&		p_{i,j}\\
		\end{matrix} \right) ,~$if$\,j\in [2,r],i\in \left[ \left( j\!-\!2 \right) u,\left( j\!-\!1 \right) u \right] ,\\
		\vspace{0.05cm}
		\left( \begin{matrix}
			p_{i,j}&		0\\
			\lambda&		p_{i,j}\\
		\end{matrix} \right) ,	~$if$\,j\in [2,r],i\in \left[ \left( j\!-\!2 \right) u,\left( j\!-\!1 \right) u \right] \!+\!\frac{k}{2},\\
		\left( \begin{matrix}
			p_{i,j}&		0\\
			0&		p_{i,j}\\
		\end{matrix} \right) ,~$	otherwise$.\\
	\end{cases}}
\end{align}
where $p_{i,j}\in \mathbb{E}$ corresponding to the entry of $\mathbf{P}_j$ in $\mathbf{G}$ and $\frac{k}{2}=(r-1)u$. It should be pointed out that in all situations, there are only three forms of $\tilde{\mathbf{P}}_{i,j}$ as characterized  in \eqref{c1}.

\begin{example}\label{ex1}
	Consider the generator matrix of a $(9, 6; 2)$ BPD code, then $r-1=2$ and $\frac{k}{2}=3$.  Clearly, $2\nmid 3$.    $\tilde{\mathbf{P}}$ is defined in \eqref{c2}.  The only difference between \eqref{c1} and \eqref{c2} is that neither $ \left( \begin{matrix}
		p_{i,j}&		0\\
		\lambda&		p_{i,j}\\
	\end{matrix} \right) $ nor  $\left( \begin{matrix}
		p_{i,j}&		1\\
		0&		p_{i,j}\\
	\end{matrix} \right) $ does not occur exactly the same times in each column of $[2, r]$.
	\begin{align}\label{c2}\small{\tilde{\mathbf{P}}=\left(\begin{matrix}
			\begin{matrix}
				p_{1,1}&		\\
				&		p_{1,1}\\
			\end{matrix}&		\begin{matrix}
				p_{1,2}&		1\\
				&		p_{1,2}\\
			\end{matrix}&		\begin{matrix}
				p_{1,3}&		\\
				&		p_{1,3}\\
			\end{matrix}\\
			\begin{matrix}
				p_{2,1}&		\\
				&		p_{2,1}\\
			\end{matrix}&		\begin{matrix}
				p_{2,2}&		\\
				&		p_{2,2}\\
			\end{matrix}&		\begin{matrix}
				p_{2,3}&	1	\\
				&		p_{2,3}\\
			\end{matrix}\\
			\begin{matrix}
				p_{3,1}&		\\
				&		p_{3,1}\\
			\end{matrix}&		\begin{matrix}
				p_{3,2}&		\\
				&		p_{3,2}\\
			\end{matrix}&		\begin{matrix}
				p_{3,3}&	1	\\
				&		p_{3,3}\\
			\end{matrix}\\
			\begin{matrix}
				p_{4,1}&		\\
				&		p_{4,1}\\
			\end{matrix}&		\begin{matrix}
				p_{4,2}&		\\
				\lambda&		p_{4,2}\\
			\end{matrix}&		\begin{matrix}
				p_{4,3}&		\\
				&		p_{4,3}\\
			\end{matrix}\\
			\begin{matrix}
				p_{5,1}&		\\
				&		p_{5,1}\\
			\end{matrix}&		\begin{matrix}
				p_{5,2}&		\\
				&		p_{5,2}\\
			\end{matrix}&		\begin{matrix}
				p_{5,3}&		\\
				\lambda&		p_{5,3}\\
			\end{matrix}\\
			\begin{matrix}
				p_{6,1}&		\\
				&		p_{6,1}\\
			\end{matrix}&		\begin{matrix}
				p_{6,2}&		\\
				&		p_{6,2}\\
			\end{matrix}&		\begin{matrix}
				p_{6,3}&		\\
				\lambda	&		p_{6,3}\\
			\end{matrix}\\
		\end{matrix} \right)} \end{align}
\end{example}

\subsection{Proof of  the MDS property}\label{sb5}
Let  $\mathbf{R}$ be a $t\times t$ block submatrix of  $\tilde{\mathbf{P}}$, determined by the row block indices  $i_1<i_2<\cdots <i_t\leq k
$ and the column block indices $j_1<j_2<\cdots <j_t\leq r$.  
Then det($\mathbf{R}$) defines a univariate polynomial $f_\mathbf{R}\left( \lambda \right)\in \mathbb{E}[\lambda]$, since $\left\{ p_{i,j}\left| i\in \left[ k \right] ,j\in \left[ r \right] \right. \right\} \subseteq \mathbb{E} $.

Clearly, if the matrix  defined in \eqref{c0} is the generator matrix of an $(n,k;2)$ MDS array code, then for any $k$ column blocks of  $\tilde{\mathbf{G}}$ is linearly independent over $\mathbb{F}_q$ or equivalently, every square block submatrix $\mathbf{R}$ of $\tilde{\mathbf{P}}$ is invertible, i.e.  $f_\mathbf{R}\left( \lambda \right)\ne 0$.
\begin{theorem}\label{th1}
	An $(n,k;2 )$ BPD code given by Construction \ref{cons}  is MDS over $\mathbb{F}_q$ where $q=O(n^{\lfloor \frac{r}{2} \rfloor \!+\!1})$.
\end{theorem}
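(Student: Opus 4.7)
The plan is to view $f_{\mathbf{R}}(\lambda)\in\mathbb{E}[\lambda]$ as a univariate polynomial and show two things: (a) $\deg f_{\mathbf{R}}\leq \lfloor r/2\rfloor$, and (b) $f_{\mathbf{R}}$ is not the zero polynomial. Since $\lambda$ has minimal polynomial of degree $\lfloor r/2\rfloor+1$ over $\mathbb{E}$, these two facts together force $f_{\mathbf{R}}(\lambda)\neq 0$, which is exactly the MDS condition.

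To establish (a) and (b), I would first expand $\mathbf{R}$ as a $2t\times 2t$ scalar matrix and permute rows and columns so that within each $2\times 2$ block the "first" rows (resp.\ columns) are grouped before the "second" ones. The inspection of the three shapes in \eqref{c1} shows that after this reordering the matrix takes the form $\begin{pmatrix}\mathbf{P}' & B\\ \lambda C & \mathbf{P}'\end{pmatrix}$, where $\mathbf{P}'$ is the $t\times t$ submatrix of $\mathbf{P}$ indexed by $i_1,\dots,i_t$ and $j_1,\dots,j_t$ (invertible by superregularity of $\mathbf{P}$), while $B,C\in\{0,1\}^{t\times t}$ record the positions of form-1 and form-2 blocks respectively. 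A Schur-complement computation then gives
\begin{equation*}
f_{\mathbf{R}}(\lambda)=\pm\det(\mathbf{P}')^{2}\,\det\bigl(\mathbf{I}_t-\lambda M\bigr),\qquad M:=(\mathbf{P}')^{-1}C(\mathbf{P}')^{-1}B.
\end{equation*}
Setting $\lambda=0$ yields $\pm\det(\mathbf{P}')^{2}\neq 0$, so (b) is immediate. Moreover, $\det(\mathbf{I}_t-\lambda M)$ has degree at most $\operatorname{rank}(M)\leq\min\{\operatorname{rank}(B),\operatorname{rank}(C)\}$.

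The crux is the rank bound. From \eqref{c1}, every form-1 block sits in a row index $i\leq k/2$ while every form-2 block sits in a row index $i>k/2$. Consequently the nonzero rows of $B$ and of $C$ lie in disjoint subsets of $[t]$, so $\operatorname{rank}(B)+\operatorname{rank}(C)\leq t\leq r$, which gives $\min\{\operatorname{rank}(B),\operatorname{rank}(C)\}\leq \lfloor t/2\rfloor\leq\lfloor r/2\rfloor$. This proves (a).

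The main obstacle I anticipate is precisely this degree bound. Naively one only sees $\deg f_{\mathbf{R}}\leq t$, which would demand $q=O(n^{r})$; the whole point of partitioning the systematic symbols into the two halves of sizes $\alpha_1$ and $\alpha_2$ is to guarantee the row-support disjointness of $B$ and $C$, which cuts this bound in half and yields the advertised field size $q=O(n^{\lfloor r/2\rfloor+1})$. A secondary bookkeeping issue is that \eqref{c1} is stated under $2\mid k$ and $(r-1)\mid k/2$; for the other cases (illustrated in Example \ref{ex1}) one has to note that the three block shapes are unchanged and the disjoint-halves property continues to hold, so the same Schur-complement argument applies verbatim.
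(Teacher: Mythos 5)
Your argument is correct, and while its top-level skeleton coincides with the paper's (a degree bound $\deg f_{\mathbf{R}}\le\lfloor r/2\rfloor$ plus nonvanishing of $f_{\mathbf{R}}$, combined with the fact that the minimal polynomial of $\lambda$ over $\mathbb{E}$ has degree $\lfloor r/2\rfloor+1$ --- i.e.\ the roles played by Lemmas \ref{le0} and \ref{ree}), your proof of the crucial degree bound takes a genuinely different route. The paper proves Lemma \ref{ree} by examining the coefficient of $\lambda^{h'}$, identified as the determinant of the matrix obtained from $\mathbf{R}$ by deleting the rows and columns in which the chosen $\lambda$'s sit, and kills it by a counting argument ($h'>t-h'$ forces a rank-deficient block of zeros); this uses the observation that each row block contains at most one triangular block and is presented somewhat informally (``we may assume'', ``equivalent to the form''). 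You instead permute rows and columns into the block form $\begin{pmatrix}\mathbf{P}'&B\\ \lambda C&\mathbf{P}'\end{pmatrix}$ and take a Schur complement to get the closed form $f_{\mathbf{R}}(\lambda)=\pm\det(\mathbf{P}')^{2}\det(\mathbf{I}_t-\lambda M)$ with $M=(\mathbf{P}')^{-1}C(\mathbf{P}')^{-1}B$, from which the degree bound follows from $\deg\det(\mathbf{I}_t-\lambda M)\le\operatorname{rank}(M)\le\min\{\operatorname{rank}(B),\operatorname{rank}(C)\}\le\lfloor t/2\rfloor$, using only that the supports of $B$ and $C$ lie in disjoint halves of the row blocks (which is exactly the design feature of Construction \ref{cons}). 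This buys you a few things: the constant term $\pm\det(\mathbf{P}')^{2}\ne 0$ comes for free from superregularity of $\mathbf{P}$, so the appeal in Lemma \ref{le0} to MDS-ness of the unpiggybacked code is unnecessary; the argument applies verbatim when $2\nmid k$ or $(r-1)\nmid \lfloor k/2\rfloor$, where \eqref{c1} as written does not literally apply (you correctly flag this); and the bound is marginally sharper ($\lfloor t/2\rfloor$, indeed $\min\{\operatorname{rank}(B),\operatorname{rank}(C)\}$, rather than $\lfloor r/2\rfloor$). What the paper's coefficient-by-deletion viewpoint buys in exchange is the finer case analysis reused in Theorem \ref{thm7}, where individual leading-coefficient determinants are inspected to push the field size down to $\mathbb{F}_{2^8}$ for $r=4$.
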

Theorem \ref{th1} is the main result of this subsection, which  will be proved at the end of this subsection.
\begin{lemma}\label{le0}
	For any $\mathbf{R},$  $f_\mathbf{R}(0)\ne 0$.
\end{lemma}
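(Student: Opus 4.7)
The plan is to exploit the fact that setting $\lambda=0$ collapses the three possible forms of $\tilde{\mathbf{P}}_{i,j}$ in \eqref{c1} into a particularly simple shape: each $2\times 2$ block becomes upper triangular. Specifically, at $\lambda=0$ every block is either
\[
\begin{pmatrix} p_{i,j} & 1 \\ 0 & p_{i,j} \end{pmatrix} \quad \text{or} \quad \begin{pmatrix} p_{i,j} & 0 \\ 0 & p_{i,j} \end{pmatrix},
\]
so that its second row is always $(0,p_{i,j})$, regardless of which of the three cases in \eqref{c1} the block originally came from.

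Given any $t\times t$ block submatrix $\mathbf{R}$ with row block indices $i_1<\cdots<i_t$ and column block indices $j_1<\cdots<j_t$, I would then apply the simultaneous row-and-column permutation that reorders $(1,2,3,4,\dots,2t-1,2t)$ as $(1,3,\dots,2t-1,2,4,\dots,2t)$ --- i.e.\ placing all the ``first-row-of-block'' indices before the ``second-row-of-block'' indices. Under this permutation, $\mathbf{R}|_{\lambda=0}$ becomes a block matrix
\[
\begin{pmatrix} \mathbf{A} & \mathbf{B} \\ \mathbf{0} & \mathbf{A} \end{pmatrix},
\]
where $\mathbf{A}=(p_{i_s,j_u})_{s,u\in[t]}$ is the $t\times t$ submatrix of $\mathbf{P}$ indexed by $i_1,\dots,i_t$ and $j_1,\dots,j_t$, $\mathbf{B}$ is a $\{0,1\}$-matrix recording which blocks were of the first type in \eqref{c1}, and the lower-left zero block comes from the uniform fact that every block's second row starts with $0$ at $\lambda=0$. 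The lower-right block is $\mathbf{A}$ again because the $(2,2)$-entry of every block is $p_{i,j}$ in all three cases.

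Once this block-triangular structure is in hand, $f_{\mathbf{R}}(0)=\pm(\det\mathbf{A})^2$ up to the sign of the row-column permutation, and superregularity of $\mathbf{P}$ (established earlier in Section \ref{sb3}) gives $\det\mathbf{A}\ne 0$, hence $f_{\mathbf{R}}(0)\ne 0$. The only step that requires any care is the bookkeeping that identifies both diagonal blocks of the permuted matrix with the same submatrix of $\mathbf{P}$ and verifies that the lower-left block really is zero --- that is, checking the three cases of \eqref{c1} uniformly at $\lambda=0$. After that, the conclusion is immediate and no properties of $\lambda$ beyond the substitution $\lambda=0$ are needed.
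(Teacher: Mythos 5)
Your proof is correct, and it takes a more self-contained route than the paper's. The paper disposes of this lemma in one line: at $\lambda=0$ the code degenerates to an original (one-directional) piggybacking design, whose MDS property is taken as known from Rashmi et al., so every square block submatrix of the parity part is already invertible. You instead verify this directly: at $\lambda=0$ every block of \eqref{c1} has second row $(0,\,p_{i,j})$, so the simultaneous ``odd rows first, odd columns first'' permutation (which leaves the determinant unchanged, since the same permutation acts on rows and columns) turns $\mathbf{R}|_{\lambda=0}$ into $\left(\begin{smallmatrix}\mathbf{A}&\mathbf{B}\\ \mathbf{0}&\mathbf{A}\end{smallmatrix}\right)$ with $\mathbf{A}$ the corresponding $t\times t$ minor of $\mathbf{P}$, giving $f_{\mathbf{R}}(0)=(\det\mathbf{A})^2\neq 0$ by superregularity. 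What your argument buys is an explicit identity for $f_{\mathbf{R}}(0)$ as the square of a minor of $\mathbf{P}$ and independence from the external fact that one-directional piggybacking preserves MDS (your block-triangular decomposition is, in effect, the determinantal form of the ``decode substripe $1$ first, then peel off the piggybacks'' argument underlying that fact); what the paper's citation buys is brevity. The only nitpick is cosmetic: the sign is exactly $+1$, not $\pm$, because the identical permutation is applied to rows and columns, though this does not affect the conclusion.
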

\begin{proof}
	If $\lambda=0$, then BPD code  degenerate to the original piggybacking design code, clearly,  it has the MDS property and for any $\mathbf{R}$,  $f_\mathbf{R}(0)\ne 0$. That is, $f_\mathbf{R}(\lambda)$ is a nonzero polynomial.
\end{proof}
\begin{lemma}\label{le1}
	If all non-diagonal   $2\times 2$ block entries of $\mathbf{R}$ are 	upper triangular matrices or lower triangular matrices, then $f_\mathbf{R}\left( \lambda \right)\ne 0$ and $
	\deg\left( f_{\mathbf{R}}\left( \lambda \right) \right)=0 
	$.
\end{lemma}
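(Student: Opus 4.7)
The plan is to split on the two alternatives in the hypothesis and reduce each case to a block-triangular determinant. In both cases the aim is to prove
\[
f_{\mathbf{R}}(\lambda) = (\det \mathbf{A})^2,
\]
where $\mathbf{A}$ is the $t\times t$ submatrix of $\mathbf{P}$ cut out by the row indices $i_1,\dots,i_t$ and column indices $j_1,\dots,j_t$. Since $\mathbf{P}$ is superregular over $\mathbb{E}$, $\det\mathbf{A}$ is a nonzero element of $\mathbb{E}$, so $f_{\mathbf{R}}$ would be a nonzero constant and both conclusions of the lemma follow at once. The case $t=1$ is immediate, because each of the three permitted block forms has determinant $p_{i_1,j_1}^2$, so from here on I assume $t\ge 2$.

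The first step, and what I expect to be the main obstacle, is to translate the hypothesis on the off-diagonal blocks into a global statement about all of $\mathbf{R}$ by invoking Construction~\ref{cons}. By \eqref{c1}, an upper triangular block $\tilde{\mathbf{P}}_{i,j}$ (with a $1$ in position $(1,2)$) occurs only when $i\le k/2$, and a lower triangular block (with $\lambda$ in position $(2,1)$) only when $i>k/2$. In Case~1, where every non-diagonal block of $\mathbf{R}$ is upper triangular, for each row index $s$ pick any column index $u\ne s$; upper triangularity of $\mathbf{R}_{s,u}$ forces $i_s\le k/2$. Hence every $i_s\le k/2$, which in turn rules out any lower triangular block appearing anywhere in $\mathbf{R}$, including on the diagonal positions $\mathbf{R}_{s,s}$. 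Consequently every $2\times 2$ block of $\mathbf{R}$ has its $(2,1)$ entry equal to $0$. In Case~2 one gets symmetrically that every $i_s>k/2$ and that every block of $\mathbf{R}$ has its $(1,2)$ entry equal to $0$.

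The second step is routine. Viewing $\mathbf{R}$ as a $2t\times 2t$ matrix, apply to both its rows and its columns the permutation $\sigma$ that sends $2s-1\mapsto s$ and $2s\mapsto t+s$ for $s=1,\dots,t$; the determinant is unchanged because $\operatorname{sgn}(\sigma)^2=1$. In Case~1 the resulting matrix has the form
\[
\begin{pmatrix}\mathbf{A} & \mathbf{B}\\ \mathbf{0} & \mathbf{A}\end{pmatrix},
\]
since every block of $\mathbf{R}$ has a zero in position $(2,1)$; in Case~2 one obtains the mirror form with $\mathbf{B}=\mathbf{0}$ and a possibly nonzero lower-left block $\mathbf{C}$. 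Either way the matrix is block triangular with both diagonal blocks equal to $\mathbf{A}$, so
\[
f_{\mathbf{R}}(\lambda)=\det(\mathbf{R})=(\det\mathbf{A})^2,
\]
a nonzero element of $\mathbb{E}$ that is independent of $\lambda$. This gives both $f_{\mathbf{R}}\ne 0$ and $\deg f_{\mathbf{R}}=0$. Nonvanishing comes directly from superregularity of $\mathbf{P}$ and does not actually require Lemma~\ref{le0}.
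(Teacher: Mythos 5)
Your second step is sound, and once the zero pattern of $\mathbf{R}$ is established it even gives a sharper conclusion than the paper's own argument: you obtain $f_{\mathbf{R}}(\lambda)=(\det\mathbf{A})^2$ explicitly from superregularity, whereas the paper simply observes that such an $\mathbf{R}$ is a block submatrix of the parity part of an original (one-directional) piggybacking code and invokes its MDS property for every $\lambda$. The problem is your first step. The inference ``upper triangularity of $\mathbf{R}_{s,u}$ for some $u\ne s$ forces $i_s\le k/2$'' is not valid: by \eqref{c1} each row block of $\tilde{\mathbf{P}}$ contains at most one strictly triangular block, and all its remaining blocks are the scalar matrices $p_{i,j}\mathbf{I}_2$, which are in particular upper (and lower) triangular regardless of whether $i\le k/2$ or $i>k/2$. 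So an off-diagonal block of $\mathbf{R}$ tells you nothing about the location of $i_s$, and the intermediate claim ``every $i_s\le k/2$'' is in fact false in general: $\mathbf{R}$ may well contain a row block from the bottom half whose $\lambda$-column is not among $j_1,\dots,j_t$, and such a row is perfectly compatible with your Case~1.

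More importantly, the reading of the hypothesis that forces you into this detour --- that only the blocks at off-diagonal \emph{positions} of $\mathbf{R}$ are constrained --- cannot be the intended one, because under that reading the lemma is false. Take row blocks $\{1,5\}$ and column blocks $\{2,3\}$ in Example~\ref{ex1}: both off-diagonal blocks of $\mathbf{R}$ are diagonal matrices, yet $\det\mathbf{R}=(p_{1,2}p_{5,3}-p_{1,3}p_{5,2})^2-p_{1,3}p_{5,2}\lambda$ has degree exactly $1$ in $\lambda$ (this is precisely the degree-one situation the paper later disposes of with the minimal-polynomial argument, not with this lemma). The hypothesis must be read type-wise: the block entries of $\mathbf{R}$ that are not diagonal matrices, wherever they sit --- including at diagonal block positions of $\mathbf{R}$ --- are either all upper triangular or all lower triangular, i.e.\ $\mathbf{R}$ contains no $\lambda$-block (your Case~1) or no $1$-block (your Case~2). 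Under that reading the ``global statement'' you were trying to manufacture is immediate and needs no argument: in Case~1 every block of $\mathbf{R}$ has zero $(2,1)$ entry by hypothesis, and your permutation to the block-triangular form with diagonal blocks $\mathbf{A}$ then gives $f_{\mathbf{R}}(\lambda)=(\det\mathbf{A})^2\ne0$, independent of $\lambda$ and without Lemma~\ref{le0}. With the faulty bridging step deleted and the hypothesis read correctly, your proof is correct and is a genuinely more explicit route than the paper's.
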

\begin{proof}
	Since such $\mathbf{R}$ can be seen as a square block submatrix of the parity blocks in a systematic generator  matrix of  a original piggybacking code,  we have  $f_\mathbf{R}\left( \lambda \right)\ne 0$  for any $\lambda$,  it follows that $
	\deg\left( f_{\mathbf{R}}\left( \lambda \right) \right) 
	$=0. Actually, we can obtain $f_\mathbf{R}\left( \lambda \right)=f_\mathbf{R}(0)$.
\end{proof}

Before proving the Theorem \ref{th1}, we show the $(9,6;2)$ BPD code defined in Example \ref{ex1} is MDS.

Throughout this paper, the blank of a matrix denotes zero and $*$ denotes nonzero element.
\begin{claim}
	In Example \ref{ex1},  for any square block submatrix $\mathbf{R}$ of $\tilde{\mathbf{P}}$, $
	\deg\left( f_{\mathbf{R}}\left( \lambda \right) \right) 
	$$\leq 1$ and the $(9,6;2)$ BPD code is MDS.
\end{claim}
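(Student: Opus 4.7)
The plan is to reduce the MDS claim to a degree bound on $f_\mathbf{R}(\lambda)$ and then close the argument with Lemma \ref{le0} and the field-extension structure. Since $r = 3$, Construction \ref{cons} sets $[\mathbb{F}_q : \mathbb{E}] = \lfloor r/2 \rfloor + 1 = 2$, so the minimal polynomial of $\lambda$ over $\mathbb{E}$ has degree $2$. Lemma \ref{le0} says $f_\mathbf{R}(0) \neq 0$, so $f_\mathbf{R}(\lambda) \in \mathbb{E}[\lambda]$ is a non-zero polynomial. Once I have $\deg f_\mathbf{R}(\lambda) \leq 1 < 2$, $\lambda$ cannot be a root, hence $f_\mathbf{R}(\lambda) \neq 0$ in $\mathbb{F}_q$. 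Applying this to every square block submatrix $\mathbf{R}$ of $\tilde{\mathbf{P}}$ shows every such block submatrix is invertible, giving the MDS property for the $(9,6;2)$ BPD code.

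The technical core is therefore the degree bound, which I will establish combinatorially from the Leibniz expansion $\det \mathbf{R} = \sum_\sigma \mathrm{sgn}(\sigma)\prod_i r_{i,\sigma(i)}$. Each occurrence of $\lambda$ inside $\tilde{\mathbf{P}}$ sits at the $(2,1)$ entry of a lower-triangular block; from \eqref{c2} these blocks are $(4,2)$, $(5,3)$, $(6,3)$, which occupy the global entries $(8,3), (10,5), (12,5)$ of $\tilde{\mathbf{P}}$. A $\lambda^2$ monomial requires a single permutation to pick two distinct $\lambda$-entries. Since $(10,5)$ and $(12,5)$ share column $5$, at most one of them can be chosen per term; any $\lambda^2$ monomial must therefore combine $(8,3)$ with one of $(10,5)$ or $(12,5)$, forcing block rows $\{4,5\}$ (or $\{4,6\}$) and block columns $\{2,3\}$ into $\mathbf{R}$.

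I will then eliminate both remaining cases by tracking the partner rows $7$ and $9$ (or $11$) that occupy the top of the two $\lambda$-bearing block rows. Block rows $4,5,6$ in Example \ref{ex1} contain only diagonal or lower-triangular blocks; no upper-triangular $+1$ appears, so the non-zero entries of rows $7,9,11$ in $\tilde{\mathbf{P}}$ live exclusively in the odd global columns $\{1,3,5\}$. Once $\sigma$ has committed columns $3$ and $5$ to the two $\lambda$-entries, the two partner rows must both map to column $1$, which is impossible for a permutation; and if column block $1$ is absent from $\mathbf{R}$, the partner rows have no available column at all, so the term is already zero. Either way, no $\lambda^2$ monomial survives, so $\deg f_\mathbf{R}(\lambda) \leq 1$.

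The part I expect to require the most care is the exhaustive case analysis: I must confirm that no partner row can escape to an even column through an upper-triangular $+1$, nor to an odd column outside $\{1,3,5\}$. In this example the needed asymmetry is immediate because all three $\lambda$-containing block rows lie in the lower half $\{4,5,6\}$ while all three $+1$-containing block rows lie in the upper half $\{1,2,3\}$, so the $\lambda$-rows and $+1$-rows never coincide. With the degree bound established, the three ingredients assembled in the first paragraph — non-vanishing at $\lambda=0$, degree at most $1$, and minimal polynomial of degree $2$ — combine to yield $f_\mathbf{R}(\lambda)\neq 0$ and hence the MDS conclusion.
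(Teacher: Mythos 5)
Your proposal is correct, and its overall skeleton is the paper's: bound $\deg f_{\mathbf{R}}(\lambda)$ by $1$, invoke Lemma \ref{le0} for $f_{\mathbf{R}}(0)\neq 0$, and use that the minimal polynomial of $\lambda$ over $\mathbb{E}$ has degree $2$ to conclude $f_{\mathbf{R}}(\lambda)\neq 0$. The difference is in how the degree bound is obtained. The paper first uses Lemma \ref{le1} to reduce to a single $3\times 3$-block shape of $\mathbf{R}$ (one upper-triangular block in the top block row, $\lambda$-blocks in column blocks $2$ and $3$), then observes that the coefficient of $\lambda^2$ is $\det(\mathbf{D})$ for the complementary $4\times 4$ minor $\mathbf{D}$, which vanishes because its last two rows are linearly dependent. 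You instead argue directly from the Leibniz expansion: the only $\lambda$-entries sit at global positions $(8,3)$, $(10,5)$, $(12,5)$ of $\tilde{\mathbf{P}}$, two of which share a column, so a $\lambda^2$ term must consume columns $3$ and $5$; the partner rows $7$ and $9$ (or $11$) are supported only on the odd columns $\{1,3,5\}$ because block rows $4,5,6$ carry no upper-triangular $+1$, so after columns $3$ and $5$ are spoken for they would both need column $1$, which no permutation allows, and every candidate $\lambda^2$ term dies. This is the same vanishing phenomenon the paper packages as "the last two rows of $\mathbf{D}$ are linearly dependent," but your formulation is more self-contained: it bypasses Lemma \ref{le1} entirely and treats all block-submatrix shapes uniformly, including the $2\times 2$-block case with block columns $\{2,3\}$ that the paper disposes of implicitly through Lemma \ref{le1}; the paper's route, in exchange, sets up the minor-determinant bookkeeping that its general Lemma \ref{ree} and Theorem \ref{thm7} reuse for larger $r$.
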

\begin{proof}
	By Lemma \ref{le1}, if $
	\deg\left( f_{\mathbf{R}}\left( \lambda \right) \right) 
	$$\geq2$, then $\mathbf{R}$ must be of the form  
	\begin{align}\label{c3}
	\small{	\left(\begin{matrix}
			\begin{array}{cc:cc:cc}
			\ast&		&		\ast&		x&		\ast&		y\\
			&		\ast&		&		\ast&		&		\ast\\\hdashline
			\ast&		&		\ast&		&		\ast&		\\
			&		\ast&		\lambda&		\ast&		&		\ast\\\hdashline
			\ast&		&		\ast&		&		\ast&		\\
			&		\ast&		&		\ast&		\lambda&		\ast\\
			\end{array}
		\end{matrix} \right) }, \mathrm{where} ~xy=0 ~\mathrm{and}~ x,y\in \left\{ 0,1 \right\},\notag
	\end{align} and the coefficient of $\lambda^2$ is $\det(\mathbf{D}) $ where $\small{\mathbf{D}=\left( \begin{matrix}
		\ast&		&		x&		y\\
		&		\ast&		\ast&		\ast\\
		\ast&		&		&		\\
		\ast&		&		&		\\
	\end{matrix} \right) }$. Since the last two rows of  $\mathbf{D}$  are linearly dependent over $\mathbb{F}_q$,  it follows that $\det(\mathbf{D})=0 $ and  $
	\deg\left( f_{\mathbf{R}}\left( \lambda \right) \right) 
	$$<2$ . 
	
	Moreover,  suppose that  there exists $\mathbf{R}$  such that $f_\mathbf{R}\left( \lambda \right)$=0.  There are two cases of $f_\mathbf{R}\left( \lambda \right)$:
	
	$\textbf {Case~1:}$ $
	\deg\left( f_{\mathbf{R}}\left( \lambda \right) \right) 
	$=0. Then  $f_\mathbf{R}\left( \lambda \right)=f_\mathbf{R}(0)\ne 0$ by Lemma \ref{le0}.
	
	$\textbf {Case~2:}$ $
	\deg\left( f_{\mathbf{R}}\left( \lambda \right) \right) 
	$=1.  If  $f_\mathbf{R}\left( \lambda \right)=0$, which  contradicts with the degree of the minimal polynomial of $\lambda$ over  $\mathbb{E}$ is 2, since  $f_\mathbf{R}\left( \lambda \right)\in \mathbb{E}[\lambda]$ and $\left[\mathbb{F} _q:\mathbb{E} \right]=\lfloor \frac{r}{2} \rfloor +1=2$,  $\mathbb{E}\left( \lambda \right)= \mathbb{F}_q$. 
	
	Therefore we have $
	\deg\left( f_{\mathbf{R}}\left( \lambda \right) \right) 
	$$<2$ and the $(9,6,2)$ BPD code is MDS.
\end{proof}

\begin{lemma}\label{ree}
	Let  $\tilde{\mathbf{G}}=\left( \mathbf{I}\left| \tilde{\mathbf{P}} \right. \right) $ be the systematic  generator matrix of an  $(n,k;2)$ BPD code  defined in Construction \ref{cons}, then for any  square block submatrix $\mathbf{R}$ of $\tilde{\mathbf{P}}$, $\deg(f_\mathbf{R}\left( \lambda \right))$$\leq \lfloor \frac{r}{2} \rfloor $.
\end{lemma}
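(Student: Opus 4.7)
The plan is to analyze the Leibniz expansion of $\det(\mathbf{R})$ directly, track the parity structure of each permutation, and show that every monomial produced already has $\lambda$-exponent bounded by $\lfloor r/2\rfloor$. Write $I\subseteq[k]$ and $J\subseteq[r]$ for the block-row and block-column index sets of $\mathbf{R}$, so $|I|=|J|=t$ and in particular $t\le r$ because $\tilde{\mathbf{P}}$ has only $r$ block columns. Partition $I=I_1\sqcup I_2$ with $I_1=I\cap[1,\alpha_1]$ and $I_2=I\cap[\alpha_1+1,k]$. The structural fact I would read off from Construction~\ref{cons} is the asymmetry: the indeterminate $\lambda$ appears only as the $(2,1)$-entry of a lower-triangular $2\times 2$ block, and such blocks live exclusively in rows belonging to $I_2$; dually, the off-diagonal $1$ appears only as the $(1,2)$-entry of an upper-triangular block, and those blocks live exclusively in rows belonging to $I_1$.

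Next I would fix a permutation $\sigma$ on the $2t$ scalar indices contributing a nonzero term to $\det(\mathbf{R})$, let $d_\sigma$ be the number of positions at which $\sigma$ selects a $\lambda$, and classify the mappings by parity of source and target: let $a,b,c,d'$ count odd-to-odd, odd-to-even, even-to-odd, and even-to-even pairs respectively. Bijectivity immediately forces $a+b=a+c=t$, hence $b=c$. For $\sigma$'s contribution to be nonzero, every even-to-odd selection must hit a $\lambda$-entry (all other even-row, odd-column block positions carry a $0$), so its block row lies in $I_2$; thus $c=d_\sigma\le|I_2|$. Symmetrically, every odd-to-even selection must hit an upper-triangular $1$-entry, so its block row lies in $I_1$; thus $b\le|I_1|$. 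Combining $d_\sigma=c=b\le|I_1|$ with $d_\sigma\le|I_2|$ gives
\[
d_\sigma\le\min(|I_1|,|I_2|)\le\left\lfloor\frac{|I_1|+|I_2|}{2}\right\rfloor=\left\lfloor\frac{t}{2}\right\rfloor\le\left\lfloor\frac{r}{2}\right\rfloor.
\]

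Taking the maximum over all $\sigma$ then delivers $\deg f_{\mathbf{R}}(\lambda)\le\lfloor r/2\rfloor$. The conceptual picture is that each $\lambda$ sitting in a bottom-half row must be \emph{paid for} in the permutation by an off-diagonal $1$ sitting in a top-half row, so the exponent is throttled by $\min(|I_1|,|I_2|)$, and the dimension constraint $t\le r$ finishes the job. The only non-trivial checks are the local one that the only nonzero even-to-odd entries are $\lambda$'s and the only nonzero odd-to-even entries are $1$'s (both immediate from Construction~\ref{cons}), and the trivial counting identity $b=c$; no polynomial manipulation beyond degree tracking is required, and one never needs to compute the coefficients themselves, which is why the argument works uniformly for the more general case where $2\nmid k$ or $(r-1)\nmid\frac{k}{2}$.
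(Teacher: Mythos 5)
Your proof is correct, and it takes a genuinely different route from the paper's, even though both rest on the same structural observation: the off-diagonal $1$'s occur only at odd-row/even-column positions inside blocks of the top row blocks, the $\lambda$'s only at even-row/odd-column positions inside blocks of the bottom row blocks, and $t\le r$. The paper first bounds $\deg f_{\mathbf{R}}$ by the number $h$ of bottom row blocks of $\mathbf{R}$, then supposes $\deg f_{\mathbf{R}}=h'\ge\lfloor r/2\rfloor+1$, identifies the coefficient of $\lambda^{h'}$ with (a signed sum of) complementary minors $\mathbf{D}_{h'}$ obtained by deleting the rows and columns through the selected $\lambda$'s, and shows every such minor vanishes because $h'$ of its rows are supported on only $t-h'<h'$ columns, yielding a contradiction. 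You instead argue term-by-term in the Leibniz expansion: the parity bookkeeping $a+b=a+c=t$ forces $b=c$, so each selected $\lambda$ (an even-to-odd assignment, necessarily in a row block of $I_2$) is matched by a selected $1$ (an odd-to-even assignment, necessarily in a row block of $I_1$), giving $d_\sigma\le\min(|I_1|,|I_2|)\le\lfloor t/2\rfloor\le\lfloor r/2\rfloor$ for every nonvanishing term. What your route buys: it avoids the coefficient-as-complementary-minor step and the paper's ``we may assume'' normalization of which $h'$ row blocks and column blocks carry the $\lambda$'s; it makes the sharper bound $\lfloor t/2\rfloor$ explicit (the paper's rank argument implies it as well, but only the $\lfloor r/2\rfloor$ consequence is extracted); and it applies verbatim to the general parameters of Construction \ref{cons}, not just the case $2\mid k$ and $(r-1)\mid\frac{k}{2}$ displayed in \eqref{c1} — though the paper's argument likewise uses only the triangularity pattern, so this last point is a gain in explicitness rather than in generality of substance. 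The only facts you must (and do) check are exactly the local ones: the unique nonzero even-row/odd-column entries are $\lambda$'s confined to $I_2$, and the unique nonzero odd-row/even-column entries are $1$'s confined to $I_1$.
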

\begin{proof}
	Consider the construction of $\mathbf{R}$.  As in \eqref{c1},  the $2\times 2$ block entries of  the first $\lfloor\frac{k}{2}\rfloor$  row blocks of  $\tilde{\mathbf{P}}$ are all diagonal   or	upper triangular matrices; and the entries of  the remaining  $\lceil\frac{k}{2}\rceil$  row blocks of  $\tilde{\mathbf{P}}$ are all diagonal  or lower triangular matrices.  Furthermore, there is only one upper triangular or lower triangular matrix in each row block, since any one data symbol  can't be piggybacked  twice in BPD. Moreover, since $\mathbf{R}$ is a $t\times t$  block submatrix of  $\tilde{\mathbf{P}}$, we may assume 
	the first $g$  and  the remaining $h$  row blocks of 
	$\mathbf{R}$ are taken from the first $\lfloor\frac{k}{2}\rfloor$ and  the last $\lceil\frac{k}{2}\rceil$  row blocks of  $\tilde{\mathbf{P}}$ respectively, where $g+h=t, t\leq r$.
	
	For  the fixed $t,g,h$, we have $\deg\left( f_{\mathbf{R}}\left( \lambda \right) \right)  \leq h$. 
	Suppose  $
	\deg\left( f_{\mathbf{R}}\left( \lambda \right) \right) 
	$$\geq \lfloor \frac{r}{2} \rfloor+1 $, then $h\geq \lfloor \frac{r}{2} \rfloor+1$. Let $h'\in\left [ \lfloor \frac{r}{2} \rfloor+1, h\right]$ and $
	\deg\left( f_{\mathbf{R}}\left( \lambda \right) \right) 
	$=$h'$, then the block matrices of the form $\begin{pmatrix}
		*&\\
		\lambda &*\\
	\end{pmatrix}$ are  distributed in at least $h'$ different column blocks in $\mathbf{R}$. We may assume there exist  $h'$ row blocks that are of the following form 
$$
\small{\left( \begin{matrix}
	\begin{array}{cc:cc:cc:cc:c}
	*&		&		*&		&		*&		&		*&		&		\cdots\\
	&		*&		\lambda&		*&		&		*&		&		*&		\cdots\\\hdashline
	*&		&		*&		&		*&		&		*&		&		\cdots\\
	&		*&		&		*&		\lambda&		*&		&		*&		\cdots\\\hdashline
	\vdots&		\vdots&		\vdots&		\vdots&		\vdots&		\vdots&		\vdots&		\vdots&		\cdots\\\hdashline
	*&		&		*&		&		*&		&		*&		&		\cdots\\
	&		*&		&		*&		&		*&		\lambda&		*&		\cdots\\
	\end{array}
\end{matrix} \right),} 
$$where  $\small{\begin{pmatrix}
	*&\\
	\lambda &*\\
\end{pmatrix}}$ are  distributed in exactly $h'$ different column blocks.

Hence if $
\deg\left( f_{\mathbf{R}}\left( \lambda \right) \right) 
$=$h'$, there  must exists a matrix $\mathbf{D}_{h'}$ satisfying $\det(\mathbf{D}_{h'})\ne 0$, where $\mathbf{D}_{h'}$ is a $(2t-h')\times (2t-h')$ matrix over  $\mathbb{F}_q$ and $\mathbf{D}_{h'}$ is obtained from $\mathbf{R}$ by removing the $h'$ columns and $h'$ rows where $\lambda$ is located at. Then the form of  $\mathbf{D}_{h'}$ is equivalent to  $\small{\left( \begin{matrix}
		\mathbf{A}_{2(t-h')\times \left( t-h' \right)}&		\mathbf{B}_{2(t-h')\times t }	\\
		\mathbf{C}_{ \left( 2h'-h' \right)\times \left( t-h' \right)}&			\mathbf{0}_{ \left( 2h'-h' \right)\times t}\\
	\end{matrix} \right)} $, i.e, we can do  elementary row and column transformations to $\mathbf{D}_{h'}$, where $\mathbf{A}_{i,j}$ (also $\mathbf{B}_{i,j}$ and $\mathbf{C}_{i,j}$) denotes a $i\times j$ matrix over $\mathbb{F}_q$ .   Since $t\leq r$ and $h'\geq\lfloor \frac{r}{2} \rfloor+1 $, we have $h'>t-h'$, it follows that  rank$\left(\left( \begin{matrix}\mathbf{C}_{h'\times \left( t-h' \right)}&		\mathbf{0}_{h'\times t}\\
	\end{matrix} \right) \right)\leq t-h'<h'$ over $\mathbb{F}_q$. Then $\mathbf{D}_{h'}$ is not of full rank and $\det(\mathbf{D}_{h'})=0$.  Therefore $
	\deg\left( f_{\mathbf{R}}\left( \lambda \right) \right) 
	$$\leq \lfloor \frac{r}{2} \rfloor $.  
\end{proof}

\begin{proof}[Proof of Theorem 2]
	It is sufficient to prove for any $\mathbf{R}$, $f_{\mathbf{R}}\left( \lambda \right)\ne 0$. There are two cases of  $f_{\mathbf{R}}\left( \lambda \right)$:
	
	\textbf{Case  1:} $\deg\left( f_{\mathbf{R}}\left( \lambda \right) \right) =0$, i.e. $f_{\mathbf{R}}\left( \lambda \right)=f_{\mathbf{R}}(0)\ne 0$ through Lemma \ref{le0}.
	
	\textbf{Case 2:} $\deg\left( f_{\mathbf{R}}\left( \lambda \right) \right) \geq0$,  by Lemma \ref{ree},  $\deg\left( f_{\mathbf{R}}\left( \lambda \right) \right) \leq \lfloor \frac{r}{2} \rfloor $, and by the definition of $\lambda$, the degree of the minimal polynomial  of $\lambda$ over $\mathbb{E}$
	is $ \lfloor \frac{r}{2} \rfloor+1$, therefore $f_{\mathbf{R}}\left( \lambda \right) \ne 0.$
\end{proof}

\section{Practical Considerations}\label{sb7}
In this section, we will show that our scheme is feasible to repair  Reed Solomon codes over $\mathbb{F}_{2^8}$  with $n\leq16$  when  $2\leq r\leq 3$\ and   $n\leq15$ when $r=4$. 

For RS codes, we need the ground field $\left| \mathbb{E} \right|\geq n$ and we choose $\mathbb{E} =\mathbb{F} _{2^4}$ in this section. Since the generator matrix of the original substripes is defined in $\mathbb{E} =\mathbb{F} _{2^4}$, the evaluation points set of  a Reed Solomon code here is  belong to $\mathbb{F}_{2^4}$.

If $r=2$ or $3$, we have $\lfloor\frac{r}{2} \rfloor+1=2$, then the degree of the minimal polynomial of $\lambda$ is 2, it follows that   $\mathbb{F} _{2^4}\left( \lambda \right) =\mathbb{F} _{2^8}$.  Therefore  for any $\lambda \in \mathbb{F} _{2^8}\backslash \mathbb{F} _{2^4}$,  the $(n,k;2)$ BPD codes are MDS over $\mathbb{F}_{2^8}$.

If  $r=4$, then  $\lfloor\frac{r}{2} \rfloor+1=3$,  as in Construction \ref{cons}, we need $\mathbb{F}_q=\mathbb{F}_{2^{12}}$. However,  we have the following theorem:
\addtolength{\topmargin}{0.03in}
\begin{theorem}\label{thm7}
	If  $r=4$ and $4<n\leq 15$, there exists $\lambda \in \mathbb{F}_{2^8}$ with $\mathbb{F}_{2^4}{(\lambda)}=\mathbb{F}_{2^8}$ such that the $(n,k;2)$ BPD codes defined in Construction \ref{cons} are  MDS.
\end{theorem}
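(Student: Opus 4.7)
The plan is to convert Theorem~\ref{thm7} into a counting/pigeonhole argument built on Lemma~\ref{ree}. For $r=4$ Lemma~\ref{ree} gives $\deg f_{\mathbf{R}}(\lambda)\le 2$ for every square block submatrix $\mathbf{R}$ of $\tilde{\mathbf{P}}$, and $f_{\mathbf{R}}(\lambda)\in\mathbb{F}_{2^4}[\lambda]$. The new difficulty compared with the $r\in\{2,3\}$ case is that $\deg f_{\mathbf{R}}$ can now match the degree of the minimal polynomial of an arbitrary $\lambda\in\mathbb{F}_{2^8}\setminus\mathbb{F}_{2^4}$, so Case~2 of the proof of Theorem~\ref{th1} no longer suffices; one must actually select $\lambda$ so as to avoid the ``bad'' quadratics.

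First I would stratify $f_{\mathbf{R}}$ by degree. Degree $0$ is nonzero by Lemma~\ref{le0}. A degree-$1$ polynomial over $\mathbb{F}_{2^4}$ has its unique root inside $\mathbb{F}_{2^4}$, hence cannot vanish at any $\lambda \in \mathbb{F}_{2^8}\setminus\mathbb{F}_{2^4}$. A degree-$2$ polynomial either splits over $\mathbb{F}_{2^4}$ (both roots in $\mathbb{F}_{2^4}$) or is irreducible, in which case its two roots form a Frobenius conjugate pair in $\mathbb{F}_{2^8}\setminus\mathbb{F}_{2^4}$ and $f_{\mathbf{R}}$ is a scalar multiple of the minimal polynomial of either root. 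Hence finding a good $\lambda$ is equivalent to exhibiting a monic irreducible quadratic $m(\mu)\in\mathbb{F}_{2^4}[\mu]$ such that no $\mathbf{R}$ yields $f_{\mathbf{R}}(\mu)=c\,m(\mu)$ with $c\in\mathbb{F}_{2^4}^\ast$.

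Second, I would use the standard count of exactly $(16^2-16)/2=120$ monic irreducible quadratics over $\mathbb{F}_{2^4}$, which partitions the $240$ elements of $\mathbb{F}_{2^8}\setminus\mathbb{F}_{2^4}$ into $120$ conjugate pairs. Letting $N$ be the number of monic irreducible quadratics that arise, up to scaling in $\mathbb{F}_{2^4}^\ast$, as some $f_{\mathbf{R}}$ with $\deg f_{\mathbf{R}}=2$, the claim reduces to proving $N<120$ for all $5<n\le 15$. Third, I would bound $N$ by enumerating the configurations of $\mathbf{R}$ that can give a degree-$2$ polynomial. By \eqref{c1}, a $\lambda$-entry sits only at the ``lower-triangular'' block positions in the bottom-half row blocks of $\tilde{\mathbf{P}}$, and each such row block carries at most one such entry, located in one of the $r-1=3$ columns $\{2,3,4\}$. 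A degree-$2$ contribution therefore forces at least two of $\mathbf{R}$'s row blocks to come from the bottom half and at least two of $\mathbf{R}$'s column blocks to coincide with the lower-triangular columns of those rows. Enumerating over $t\in\{2,3,4\}$ and over the at most $\lceil k/2\rceil\le 6$ bottom row blocks and three relevant column blocks, and then extracting the leading $\lambda^2$-coefficient via Schur-complement / Laplace expansion as a subdeterminant of the superregular matrix $\mathbf{P}$, yields a concrete upper bound on the number of admissible $\mathbf{R}$'s.

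The main obstacle is the final counting step: one must show that, after accounting for collisions (distinct $\mathbf{R}$'s producing proportional polynomials) and for the reducible cases, the number of distinct monic irreducible quadratics produced stays strictly below $120$ in the tightest regime $n=15$, $k=11$. I expect this bound to hold with significant slack, since many $f_{\mathbf{R}}$'s are either reducible or share the same $\lambda^2$-coefficient subminor of $\mathbf{P}$; in the corner cases a finite computer check against a specific superregular $\mathbf{P}$ over $\mathbb{F}_{2^4}$ (say a Cauchy matrix with entries from the $n-1$ evaluation points) can certify $N<120$. Once this is established, a pigeonhole argument produces an unused monic irreducible quadratic $m(\mu)$ over $\mathbb{F}_{2^4}$; choosing $\lambda\in\mathbb{F}_{2^8}$ to be either root of $m$ gives $\mathbb{F}_{2^4}(\lambda)=\mathbb{F}_{2^8}$ and $f_{\mathbf{R}}(\lambda)\ne 0$ for every $\mathbf{R}$, so the resulting $(n,k;2)$ BPD code is MDS over $\mathbb{F}_{2^8}$.
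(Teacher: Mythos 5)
Your overall strategy is the same as the paper's: use Lemma~\ref{ree} to cap $\deg f_{\mathbf{R}}(\lambda)$ at $2$ when $r=4$, observe that degree $0$ and $1$ polynomials in $\mathbb{F}_{2^4}[\lambda]$ cannot vanish at $\lambda\in\mathbb{F}_{2^8}\setminus\mathbb{F}_{2^4}$, and then run a pigeonhole count over the degree-$2$ ``bad'' polynomials (you count monic irreducible quadratics against $120$; the paper counts roots against $|\mathbb{F}_{2^8}\setminus\mathbb{F}_{2^4}|=240$ --- these are equivalent bookkeeping choices).

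The genuine gap is that you never carry out the decisive counting step, and the naive count does not close. For $n=15$, $k=11$, a degree-$2$ contribution needs two row blocks from each half of $\tilde{\mathbf{P}}$, so without further pruning there are $\binom{5}{2}\binom{6}{2}=150$ candidate configurations, i.e.\ potentially up to $150$ irreducible quadratics ($>120$) or $300$ bad field elements ($>240$); the pigeonhole fails at this level. The paper's proof spends most of its effort precisely on the pruning you skip: it shows structurally (Cases~1--4, via exhibiting linearly dependent rows in the $\lambda^2$-coefficient matrix $\mathbf{D}$) that configurations with $t=3$, or with the two piggyback $1$'s in the same column block, or with the two $\lambda$'s in the same column block, etc., cannot actually have degree $2$, which cuts the count to $\bigl(\binom{5}{2}-2\bigr)\bigl(\binom{6}{2}-3\bigr)=96$, hence at most $192<240$ bad values of $\lambda$. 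Your sketch instead asserts that the $\lambda^2$-coefficient is ``a subdeterminant of the superregular matrix $\mathbf{P}$'' --- which is not correct in general, since the coefficient matrix also contains the piggyback $1$'s and structural zeros, and its singularity in the excluded cases is exactly what makes the count small --- and then defers the remaining estimate to expected slack plus ``a finite computer check against a specific $\mathbf{P}$.'' That fallback does not prove the theorem as stated (it would only certify one particular code rather than the construction for all admissible evaluation points), so as written the argument is incomplete where the paper's argument does the real work.
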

\begin{proof}
	We choose $\mathbb{E}=\mathbb{F}_{2^4}$. As in Section \ref{sb3},  let   $\tilde{\mathbf{G}}=\left(\mathbf{I}\left| \tilde{\mathbf{P}} \right. \right) $ be the systematic  generator matrix of an  $(n,k;2)$ BPD code and $\mathbf{R}$ be a $t\times t$  block submatrix of $\tilde{\mathbf{P}}$ where $t\in [4]$.
	Thus it is sufficient to prove that for any ${\mathbf{R}}$, $\exists \lambda \in \mathbb{F}_{2^8} $  such that  $\det(\mathbf{R})=f_{\mathbf{R}}({\lambda})\ne 0$. 
	
	By Lemma \ref{ree}, we have $deg\left( f_{\mathbf{R}}\left( \lambda \right) \right) \leq 2
	$. 	Since the degree of the  minimal polynomial of  $\lambda$ over  $\mathbf{F}_{2^4}$ is 2,  we only consider the determinant of these $\mathbf{R}$  with $deg\left( f_{\mathbf{R}}\left( \lambda \right) \right) =2$. By Lemma \ref{le1}, the $2\times 2$ block entries of  $\mathbf{R}$ must contain upper and lower triangular matrices.  Then the form of $\mathbf{R}$ must be equivalent to the form  in the following  
	$$
		\small{\mathbf{R}_0=\begin{pmatrix}
			\begin{array}{cc:cc:cc:cc}
			\ast&		&		\ast&		1&		\ast&	&		\ast&	\\
			&		\ast&		&		\ast&		&		\ast&		&		\ast\\\hdashline
			\ast&		&		\ast&		&		\ast&		1&		\ast&	\\
			&		\ast&		&		\ast&		&		\ast&		&		\ast\\\hdashline
			\ast&		&		\ast&		&		\ast&		&		\ast&		\\
			&		\ast&		\lambda&		\ast&		&		\ast&		&		\ast\\\hdashline
			\ast&		&		\ast&		&		\ast&		&		\ast&		\\
			&		\ast&		&		\ast&		\lambda&		\ast&		&		\ast\\
			\end{array}
		\end{pmatrix} } ,
$$and let  these $\mathbf{R}_0$  form a set $\mathbf {M}_n$.
	Since the following four  cases can be excluded:
	
	\textbf{Case 1:}
$$\small{\mathbf{R}_1=\begin{pmatrix}
	\begin{array}{cc:cc:cc}
		\ast&		&		\ast&	1&		\ast&			\\
		&		\ast&		&		\ast&		&		\ast	\\\hdashline
		\ast&		&		\ast&		&		\ast&		\\
		&		\ast&		\lambda&		\ast&		&		\ast\\\hdashline
		\ast&		&		\ast&		&		\ast&		\\
		&		\ast&		&		\ast&		\lambda&		\ast\\
\end{array}		\end{pmatrix}, }	$$  if  $deg\left( f_{\mathbf{R}}\left( \lambda \right) \right) =2$,  then the coefficient of $\lambda^2$ is $\det \left( \mathbf{D}_1 \right) $, where $\small{\mathbf{D}_1= \begin{pmatrix}
		\ast&		&		1&		\\
		&		\ast&		&		\ast\\
		\ast&		&		&		\\
		\ast&		&		&		\\
	\end{pmatrix} }$, clearly, the last two rows of  $\mathbf{D_1}$ are  linearly dependent over $\mathbb{F}_{2^8}$ and $\det \left( \mathbf{D}_1 \right) =0$ which contradicts with  $deg\left( f_{\mathbf{R}}\left( \lambda \right) \right) =2$.
	
	\textbf{Case 2:}
	$$
	\mathbf{R}_2=\small{\begin{pmatrix}
		\begin{array}{cc:cc:cc:cc}
			\ast&		&		\ast&	1&		\ast&	&		\ast&		\\
		&		\ast&		&		\ast&		&		\ast&		&		\ast\\\hdashline
		\ast&		&		\ast&	1&		\ast&		&		\ast&	\\
		&		\ast&		&		\ast&		&		\ast&		&		\ast\\\hdashline
		\ast&		&		\ast&		&		\ast&		&		\ast&		\\
		&		\ast&		\lambda&		\ast&		&		\ast&		&		\ast\\\hdashline
		\ast&		&		\ast&		&		\ast&		&		\ast&		\\
		&		\ast&		&		\ast&		\lambda&		\ast&		&		\ast\\
		\end{array}
	\end{pmatrix}  }, 
	$$
	similarly, the coefficient of  $\lambda^2$ is  $\det \left( \mathbf{D}_2 \right) $, where $\small{\mathbf{D}_2=\left(\begin{matrix}
		\begin{array}{cc:cc:cc}
		\ast&		&		1&		&		\ast&		\\
		&		\ast&		\ast&		\ast&		&		\ast\\\hdashline
		\ast&		&		1&		&		\ast&		\\
		&		\ast&		\ast&		\ast&		&		\ast\\\hdashline
		\ast&		&		&		&		\ast&		\\
		\ast&		&		&		&		\ast&		\\
		\end{array}
	\end{matrix} \right) }$.  Since the four rows  $\small{\left( \begin{matrix}
		\ast&		&		1&		&		\ast&		\\
		\ast&		&		1&		&		\ast&		\\
		\ast&		&		&		&		\ast&		\\
		\ast&		&		&		&		\ast&		\\
	\end{matrix} \right)} $ of  $\mathbf{D}_2$ are linearly dependent, it follows that   $\det \left( \mathbf{D}_2 \right)=0 $.
	
Due to the lack of space, we discuss the remaining two cases in the Appendix.
	
	Moreover,  $4<n\leq 15$ and it is sufficient to consider the field size when  $n=15$, since the number of the block submatrices in $\mathbf{M}_{15}$ of  $\tilde{\mathbf{P}}$   is  the largest when $n=15$ and  $k=n-r=11$.	The matrix $\tilde{\mathbf{P}}$ is composed of  two parts: the first $5$ row blocks and the last $6$ row blocks. To obtain a matrix of the form like $\mathbf{R}_0$, we need choose two row blocks from each of the two parts. Clearly, if the two row blocks chosen from the first part are equivalent to the following form  can be ruled out:
		 $$\small \begin{pmatrix}
			\begin{array}{cc:cc:cc:cc}
				\ast&		&		\ast&	1&		\ast&	&		\ast&		\\
				&		\ast&		&		\ast&		&		\ast&		&		\ast\\\hdashline
				\ast&		&		\ast&	1&		\ast&		&		\ast&	\\
				&		\ast&		&		\ast&		&		\ast&		&		\ast\\
				\end{array}
			\end{pmatrix}.$$ Similarly, if the two row blocks chosen from the last part are equivalent to the following form  can be ruled out:
		 $$\small \begin{pmatrix}
			\begin{array}{cc:cc:cc:cc}
				\ast&		&		\ast&	&		\ast&	&		\ast&		\\
				&		\ast&	\lambda	&		\ast&		&		\ast&		&		\ast\\\hdashline
				\ast&		&		\ast&	&		\ast&		&		\ast&	\\
				&		\ast&	\lambda	&		\ast&		&		\ast&		&		\ast\\
			\end{array}
		\end{pmatrix}.$$ Then by some simple calculations, we obtain there are $\left(\binom{5}{2}-2\right)\times \left(\binom{6}{2}-3\right) =96$ such $\mathbf{R}_0$, since some cases can be excluded.    
	
	Suppose that for any $\mathbf{R}_0 \in  \mathbf{M}_{15}$,  $\deg\left( f_{\mathbf{R}_0}\left( \lambda \right) \right) =2$,  then there are at most $96\times 2 =192$ elements of $\mathbb{F}_q\backslash \mathbb{E} $ such that $
	\prod_{\mathbf{R}_0\in \mathbf{M}_{15}}{f_{\mathbf{R}_0}\left( \lambda \right)}=0$. Since  $\left| \mathbb{F} _{2^8}\backslash \mathbb{F} _{2^4} \right|=240>192$,  there exists such  $\lambda\in \mathbb{F} _{2^8}\backslash \mathbb{F} _{2^4}$ such that  the  $(15,11,2)$ BPD is MDS.  
	Thus the proof is completed.
\end{proof}
\section{Comparisons and Conclusion}\label{sb9}
\subsection{Comparisons}
In this section, we will make  comparisons of  BPD with PDs. 

Since the subsequent PDs\cite{Tang2015,Ge2016,Huang2018,Tang2019,Sun2021,Hou2021,Shi2022} do not outperform \cite{Ra2013,Ra2017} in repairing MDS array codes when $l=2$, we only compare the average repair bandwidth ratio of systematic nodes with \cite{Ra2013,Ra2017} in Table \ref{t1}. 

\begin{table}[htbp]
	\renewcommand\arraystretch{1.2}
	\begin{center}
		\caption{\scriptsize ARBR of  Some Common Eemployed  RS Codes Over $\mathbb{F}_{2^8}$ }\label{t1}
		\centering
		\footnotesize{	\begin{tabular}{|c|c|c|c|c|}
			\hline RS$(n,k)$ &K. V. Rashmi $et~ al$.\cite{Ra2013,Ra2017}& This paper\\
			\hline  $\left(9,6\right)$ &$69.4\%$&$63.9\%$\\
			\hline  $(11,8)$&$68.75\%$& $62.5\%$\\		
			\hline$(12,8)$ & $65.6\%$&$59.4\%$\\
			\hline $(14,10)$ & $67.5\%$&$59\%$\\			
			\hline 		
		\end{tabular}}
	\end{center}
\end{table}	

Actually, as $r$ goes larger, the  advantage in ARBR of BPD  becomes more evident.  The reason we only show the codes with $r\leq 4$ is that the redundancy $r$ of the codes currently used in distributed storage systems is usually small.





\subsection{Conclusion}\label{V}
In this paper, we propose the bidirectional piggybacking design (BPD) with $l=2$ which can further reduce the repair bandwidth of systematic nodes of the  piggybacking design in \cite{Ra2013,Ra2017} with $l=2$, however we need larger field size to maintain the MDS property. How to further reduce the field size is our future work.

\end{document}